\documentclass[letterpaper]{article} 
\usepackage{aaai2027}  
\usepackage[hyphens]{url}  
\usepackage{graphicx} 
\usepackage{natbib}  
\usepackage{caption} 
\usepackage{amsmath, amssymb, amsthm, bm}
\usepackage{algorithm}
\usepackage{algorithmic}

\usepackage{newfloat}
\usepackage{listings}
\DeclareCaptionStyle{ruled}{labelfont=normalfont,labelsep=colon,strut=off} 
\floatstyle{ruled}
\newfloat{listing}{tb}{lst}{}
\floatname{listing}{Listing}

\newtheorem{theorem}{Theorem}
\newtheorem{lemma}{Lemma}
\newtheorem{corollary}{Corollary}
\newtheorem{definition}{Definition}

\usepackage{booktabs}

\newcommand{\cf}{^{-z}}   

\title{Forgetting Without Restarting:\\Execution-State Unlearning for Stateful LLM Agents}
\author{
    Written by AAAI Press Staff\textsuperscript{\rm 1}\thanks{With help from the AAAI Publications Committee.}\\
    AAAI Style Contributions by Peter Patel Schneider,
    Sunil Issar,\\
    J. Scott Penberthy,
    George Ferguson,
    Hans Guesgen,
    Francisco Cruz\equalcontrib\corresponding,
    Marc Pujol-Gonzalez\equalcontrib\corresponding
}
\affiliations{
    \textsuperscript{\rm 1}Association for the Advancement of Artificial Intelligence\\

    1101 Pennsylvania Ave, NW Suite 300\\
    Washington, DC 20004 USA\\
    proceedings-questions@aaai.org
}

\author{
    Chao Yao\textsuperscript{\rm 1},
    Yangbo Wei\textsuperscript{\rm 2},
    Zhen Huang\textsuperscript{\rm 2},
    Junhong Qian\textsuperscript{\rm 2},\\
    Chenle Chen\textsuperscript{\rm 2},
    Shaoqiang Lu\textsuperscript{\rm 2},
    Chen Wu\textsuperscript{\rm 2},
    Lei He\textsuperscript{\rm 2}\corresponding
}
\affiliations{
    \textsuperscript{\rm 1}Arizona State University, USA\\
    \textsuperscript{\rm 2}Eastern Institute of Technology, Ningbo, China\\
    cyao22@asu.edu,
    yangforever@sjtu.edu.cn
}

\begin{document}

\maketitle

\begin{abstract}
Long-running LLM agents are stateful: beyond the transcript they accrete compressed summaries, plaintext memory, pending tool plans, and---under every serving API---a KV cache. Yet today's ``forget'' operations delete a plaintext memory record and stop, leaving every artifact derived from the revoked information intact. We formalize \emph{execution-state unlearning}: after a forget request, the agent must behave as if it had never observed the target. Modeling the runtime as a deterministic transition system, we prove that the pre-target trajectory prefix is shared with this counterfactual world for free, that the post-target suffix is irreducibly tainted without token-level attribution, and that exact unlearning requires at least $T-\tau+1$ recomputed transitions, where $\tau$ is the target's injection step. \emph{Provenance-Guided Selective Replay} attains this bound as a cross-layer contract spanning prompt, compressed memory, and cache: a provenance graph locates the injection point, checkpoint restoration reduces to \emph{cropping} the KV cache, and sanitized replay regenerates the counterfactual suffix. Audited with elicitation, stochastic, and string-free behavioral tests across three agent suites, nine baselines, and three model families, memory deletion leaves leakage unchanged, instruction-based forgetting collapses under elicitation (Leak@probes $=1.00$), and source redaction still \emph{acts} on a revoked preference in 80\% of episodes---while selective replay is indistinguishable from a full reset at up to $9\times$ fewer recomputed tokens.
\end{abstract}

\section{Introduction}

\begin{figure}[t]
\centering
\includegraphics[width=0.97\columnwidth]{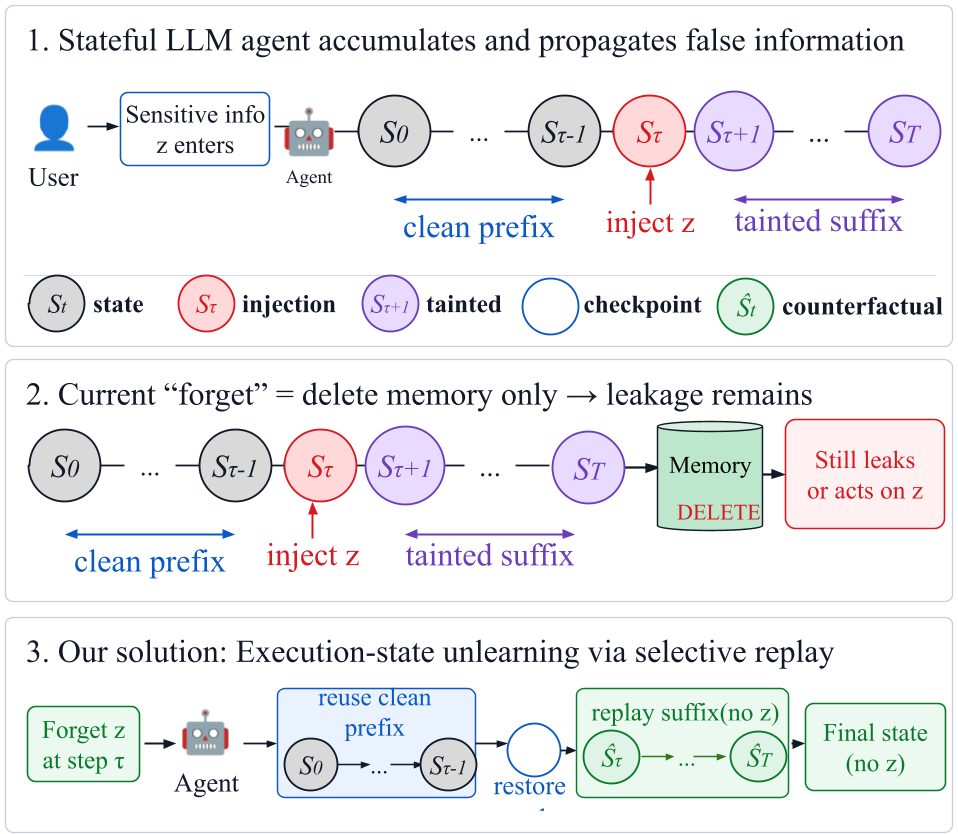}
\caption{Execution-state unlearning at a glance. A target $z$ entering at step $\tau$ splits the trajectory into a clean prefix and a suffix whose summaries, plans, and cache are all tainted \emph{(1)}. Deleting the memory record leaves that suffix intact---the agent still leaks or acts on $z$ \emph{(2)}. Selective replay restores $\hat S_{\tau-1}$ by \emph{cropping} the cache, then replays the sanitized suffix to $\hat S_T$ \emph{(3)}.}
\label{fig:motivation}
\end{figure}

Within months of release, agent frameworks such as OpenClaw \cite{openclaw2026} and Hermes Agent \cite{hermes2026} accumulated hundreds of thousands of deployments as \emph{always-on personal agents} that run for weeks, operate tools, and remember their users. What makes them useful is precisely that they are \emph{stateful}: a modern runtime layers the transcript; \emph{context compaction} that rewrites older turns into model-authored summaries; plaintext long-term memory re-injected at session start \cite{packer2024memgpt,chhikara2025mem0}; tool traces and pending plans; and, beneath all of these, the KV cache---universal serving infrastructure reused across requests by every engine and commercial API \cite{kwon2023vllm,zheng2024sglang,gim2024promptcache}. Whatever enters an agent's context is compressed into summaries, distilled into plans, persisted into memory, and materialized as cached tensors (Figure~\ref{fig:motivation}). This collides with an equally basic requirement: sometimes the agent must \emph{un-see} something---a user revokes consent for a home address mid-session \cite{gdpr2016}, a secret is pasted by accident, or an indirect injection plants content inside a fetched page or tool result \cite{greshake2023not}. The last case is the common one and the user never witnesses it: the agent silently reads the injected content, folds it into its state, and moves on; the forget trigger, when it comes, comes from a detector or operator \emph{after the fact}. Yet deployed stacks offer only a forgetting affordance that operates on \emph{plaintext at a single layer}: delete the memory record, edit the Markdown file, drop the message from retrieval. The industry has equated forgetting with un-indexing.

We show systematically that this equation fails, and that the failure is invisible to the string-matching evaluations used to certify it. On three agent suites instrumented with memory injection, compaction, and tool use \cite{wu2025longmemeval,lu2024toolsandbox,debenedetti2024agentdojo}, deleting the persistent memory record leaves leakage \emph{exactly} unchanged from doing nothing (0.86--1.00 any-leak): the target survives in the session's derived state. An instruction to forget looks far better on a single task-shaped probe, but under a six-probe elicitation audit the same state yields the target with probability $1.00$---merely suppressed. Source redaction fails more subtly: the model's own summary re-encodes the target in paraphrase, beyond any forbidden-string list, and in a behavioral test the redacted agent still \emph{acts} on a revoked preference in 80\% of episodes while emitting the string zero times. String metrics certify precisely the methods that fail. Even information-flow control \cite{costa2025fides}, which blocks tainted \emph{future} flows, cannot clean state already contaminated: an IFC-only baseline leaks at the no-forget rate.

What should ``forget'' mean for a running agent? We argue for a counterfactual criterion: future behavior must be indistinguishable from a twin agent that \emph{never observed} the target. Modeling the runtime as a deterministic transition system, we define the counterfactual trajectory induced by deleting the target $z$ from the observation stream at its injection step $\tau$, and call an operator an exact \emph{execution-state unlearner} if it maps the real final state to the counterfactual one; unlike parameter unlearning \cite{cao2015towards,bourtoule2021machine,maini2024tofu}, the edited object is non-parametric runtime state, where exactness is attainable. The formalism yields sharp structure: a prefix-sharing lemma makes the first $\tau{-}1$ counterfactual steps \emph{free} (the clean prefix is literally a prefix of the contaminated cache, so restoring it is a crop); a taint lemma shows that without token-level attribution every artifact at or after $\tau$ is unsalvageable---computation cannot be edited, only replayed. A splicing theorem then proves checkpoint-and-replay reconstructs the counterfactual state exactly, with a matching $T-\tau+1$ lower bound: forgetting cost is governed by the counterfactual divergence, not the session length.

We realize this as \emph{Provenance-Guided Selective Replay}, an auditable cross-layer contract from prompt to compressed memory to KV cache: an artifact-level provenance graph recorded during execution, sparse metadata-only checkpoints whose restoration is a cache crop, and sanitized replay with shadow-executed, deduplicated consequential tools. Because string matching cannot certify forgetting, we audit with \emph{Leak@probes} (six elicitation probes), \emph{Leak@5} (stochastic samples), a string-free \emph{behavioral-extraction} suite, and counterfactual-action divergence, each anchored to a measured false-positive floor. Selective replay sits at the floor on every axis while recomputing up to $9\times$ fewer tokens than a full reset, with cost tracking the proven $T-\tau+1$ line ($R^2\!\approx\!1$); results reproduce across Llama-3.1-8B, Qwen2.5-7B, and Mistral-7B.

Our contributions: \textbf{(1) Problem}: execution-state unlearning for stateful LLM agents, formalized via counterfactual equivalence over reconstructible runtime state---the layer today's forget operations silently skip. \textbf{(2) Theory}: the prefix-sharing and certifiable-taint-boundary lemmas, an exact splicing theorem, and $T-\tau+1$ optimality. \textbf{(3) System}: Provenance-Guided Selective Replay, composing provenance, crop-as-restore checkpoints, and side-effect-safe replay into one forgetting contract. \textbf{(4) Audits and evidence}: elicitation, stochastic, and string-free behavioral audits with measured floors; nine baselines on three suites and three model families---every deployed-style forget fails at least one audit; selective replay matches a full reset at a fraction of its cost.

\section{Related Work}

\paragraph{Machine unlearning.}
Unlearning classically removes training data's influence from \emph{model parameters}, exactly by retraining from sharded checkpoints \cite{cao2015towards,bourtoule2021machine} or approximately by fine-tuning \cite{eldan2023whos}, with benchmarks surveyed for LLMs \cite{maini2024tofu,liu2025rethinking}; approximate unlearning is notoriously hard to verify. Our setting inverts this: the object is the agent's \emph{non-parametric execution state}, whose transition function is replayable, so \emph{exact} unlearning is attainable and certifiable. Conceptually, checkpoint-and-replay is the runtime analogue of SISA's shard-and-retrain \cite{bourtoule2021machine}, except that causality gives the shard boundary ($\tau$) for free.

\paragraph{Agent memory systems.}
Long-horizon agents externalize state into managed memory: paged context \cite{packer2024memgpt}, extracted fact stores \cite{chhikara2025mem0}, and, in deployed frameworks, plaintext Markdown or SQLite memories with compaction \cite{openclaw2026,hermes2026}. All expose deletion of a stored record; none propagate it into the live session's derived artifacts or cache, and memory benchmarks \cite{wu2025longmemeval} evaluate recall, not revocation. Our episodes exercise exactly these abstraction layers, and their delete operation is baseline B1---behaviorally a no-op.

\paragraph{KV-cache reuse and serving.}
Prefix caching is universal serving infrastructure: paged attention \cite{kwon2023vllm}, radix-tree prefix sharing \cite{zheng2024sglang}, and modular attention reuse \cite{gim2024promptcache} all reuse attention states keyed on byte-identical prefixes. This machinery is built for \emph{reuse}, not \emph{revocation}: it provides no statement about what a cached suffix still encodes. We run the same mechanism in reverse---Lemma~\ref{lem:prefix} makes \emph{cropping} a cache a certified restoration operator---and add what caching cannot: a provenance-backed guarantee of what was regenerated and why.

\paragraph{Agent security: prevention, detection---and no remediation.}
Indirect prompt injection \cite{greshake2023not,debenedetti2024agentdojo} has produced two defense families. \emph{Prevention by design} constrains what untrusted content can do before it does it: instruction-hierarchy training \cite{wallace2024instruction}, quarantine and plan-then-execute patterns \cite{beurerkellner2025design}, capability policies over extracted data flows (CaMeL; \citealt{debenedetti2025camel}), and information-flow labels with deterministic sink gating (FIDES; \citealt{costa2025fides}). \emph{Detection} flags injected content via classifiers, model-internal features, or localization of the injected span \cite{jia2026promptlocate}. Both leave the same gap: prevention is imperfect, and detection is routinely \emph{asynchronous}---by the time a flag fires, the agent has already read the content, folded it into summaries, plans, and cache, and moved on. What happens to that session is unaddressed: IFC constrains future flows but cannot clean resident state (our B8 leaks at the no-forget rate), and no defense above offers a remediation primitive. We supply the recovery half: a detector's verdict is exactly the $(z,s)$ input Algorithm~\ref{alg:replay} consumes, and composing IFC with splicing (B9) enforces sink policies over a runtime that no longer contains the target.

\begin{figure*}[t]
\centering
\includegraphics[width=0.74\textwidth]{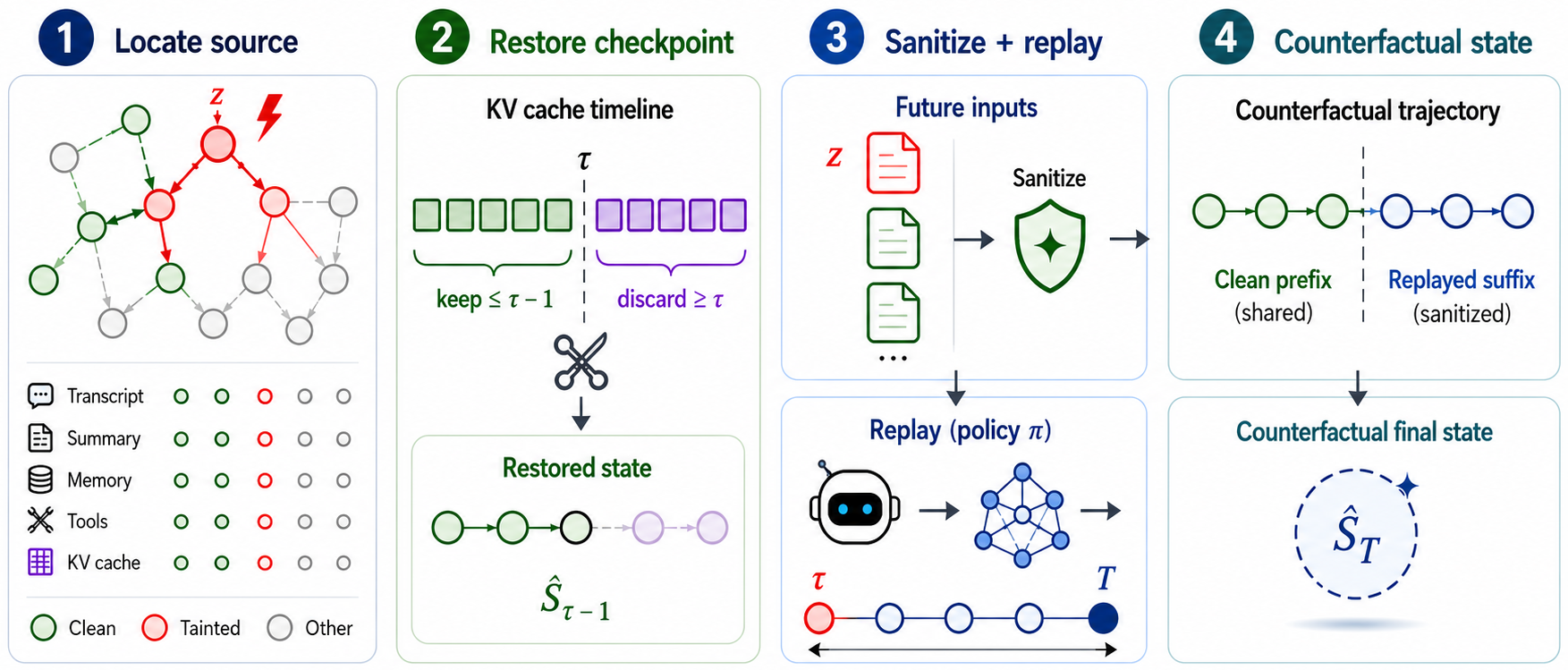}
\caption{Provenance-Guided Selective Replay (Alg.~\ref{alg:replay}). A reachability query returns $\tau$ and the taint closure \emph{(1)}; the KV timeline is cropped at $\tau{-}1$, the free prefix of Lemma~\ref{lem:prefix} \emph{(2)}; the suffix is reconstructed with $z$ dropped and all else verbatim \emph{(3)}; the splice ends at $\hat S_T = R\cf_T$ \emph{(4)}, Theorem~\ref{thm:splice}.}
\label{fig:arch}
\vspace{-10pt}
\end{figure*}

\section{Method: Forgetting as Counterfactual Trajectory Splicing}
\label{sec:method}

Formalizing the runtime as a deterministic transition system makes the ``twin agent that never saw the target'' a well-defined object---the counterfactual trajectory. Causality guarantees the two trajectories coincide pointwise before the target enters, so forgetting reduces to a \emph{splicing} problem: reuse the common prefix already computed, and re-enact only the post-divergence suffix in the world without the target. The minimal cost of forgetting is thus governed by the \emph{counterfactual divergence} $T-\tau$, not the session length $T$; our method realizes this bound as an executable system.

\subsection{The Runtime as a Deterministic Transition System}
\label{sec:transition}

Let the runtime state be $R_t \in \mathcal{R}$ (comprising the KV cache, working memory, uncommitted tool plans---all reconstructible components), and let $o_1,\dots,o_T \in \mathcal{O}$ be the external observation stream (user inputs, memory injections, tool returns). A session is the trajectory
\begin{equation}
R_{t} = F(R_{t-1},\, o_t;\ \theta), \quad t=1,\dots,T, \quad R_0 = R_{\mathrm{init}},
\end{equation}
where $F:\mathcal{R}\times\mathcal{O}\to\mathcal{R}$ is determined jointly by the model's forward computation and the agent scaffold. so transition $t$ consumes $o_t$ and produces $R_t$, and $R_T$ is the final state. The forget target $z$ first enters through an observation at step $\tau \ge 1$: $z \in o_\tau$ and $z \notin o_t$ for all $t < \tau$.

Define the \emph{counterfactual observation stream} $o\cf = (o\cf_1,\dots,o\cf_T)$ with $o\cf_t = o_t$ for $t \neq \tau$ and $o\cf_\tau = o_\tau \setminus \{z\}$. With $R\cf_0 = R_{\mathrm{init}}$ and $R\cf_{t} = F(R\cf_{t-1}, o\cf_t;\theta)$, this induces the \emph{counterfactual trajectory} $\{R\cf_t\}$---the parallel world in which the agent never saw $z$.

\begin{definition}[Counterfactual-equivalent forgetting]
\label{def:cf}
An unlearning operator $U:\mathcal{R}\times\mathcal{Z}\to\mathcal{R}$ is \emph{exact} iff $U(R_T,z) = R\cf_T$ (deterministic decoding); under stochastic decoding this relaxes to $\varepsilon$-consistency of future behavior distributions, $D\big(P_A(\cdot \mid U(R_T,z)),\ P_A(\cdot \mid R\cf_T)\big) \le \varepsilon$.
\end{definition}

The definition packages three intuitive requirements at once: the target is no longer accessible (the counterfactual never contained $z$); derived influence is removed (the counterfactual summary/plan never depended on $z$); and non-target utility is preserved (all other observations are kept verbatim). Note also what it does \emph{not} assume: who asks. The operator consumes only a revocation event $(z,s)$ naming the target and its source artifact---raised by the user, the platform, or an injection detector flagging a tool observation \emph{after} the agent has processed it; in that last, common case the user never saw $z$, and provenance, not human recollection, locates $\tau$. The operator realizing $U$ works over the recorded runtime---observation log, provenance graph, checkpoints, environment snapshots---the computational model made explicit in Corollary~\ref{cor:lb}.

\subsection{Two Lemmas: a Free Prefix and a Stubborn Suffix}

\begin{lemma}[Prefix sharing]
\label{lem:prefix}
For all $t \le \tau - 1$, $R_t = R\cf_t$.
\end{lemma}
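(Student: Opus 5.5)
The plan is a straightforward induction on $t$ from $0$ to $\tau-1$, resting on two facts already fixed in Section~\ref{sec:transition}: both trajectories share the same initial state and the same deterministic transition map $F(\cdot,\cdot;\theta)$, and the two observation streams differ only at index $\tau$. The inductive claim is $R_t = R\cf_t$ for every $0 \le t \le \tau-1$, which contains the lemma.

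\textbf{Base case.} At $t=0$ both trajectories begin at $R_0 = R\cf_0 = R_{\mathrm{init}}$ by definition. This case also settles the degenerate situation $\tau = 1$, where the lemma's range is empty and nothing further is needed.

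\textbf{Inductive step.} Fix $1 \le t \le \tau-1$ and assume $R_{t-1} = R\cf_{t-1}$. Because $t \ne \tau$, the definition of the counterfactual stream gives $o\cf_t = o_t$. Then
\begin{equation}
R\cf_t = F(R\cf_{t-1}, o\cf_t;\theta) = F(R_{t-1}, o_t;\theta) = R_t .
\end{equation}
This step uses only that $F$ is a function, i.e.\ the determinism hypothesis: identical inputs yield identical outputs. It uses no property of $z$ beyond the fact that its removal is confined to index $\tau$. The assumption $z \notin o_t$ for $t<\tau$ is consistent with this picture but is not logically needed, since the two streams agree off $\tau$ by construction.

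\textbf{Where the real content lies.} The formal argument is routine; the obstacle is justifying the modeling assumptions it relies on. The first is that $F$ depends on the past only through $R_{t-1}$ and $o_t$, which requires $R_t$ to include every piece of scaffold state, so there is no hidden dependence on future observations. The second is that $F$ is deterministic. Under greedy decoding with fixed kernels this is a stipulation of the model. Under stochastic decoding I would restore determinism by coupling the two worlds with shared random seeds indexed by step, so that $F$ becomes a deterministic function of $(R_{t-1}, o_t, \xi_t)$ and the same induction goes through.

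I would close by remarking on the operational reading stated in the introduction. Since the KV-cache component of $R_{\tau-1}$ equals that of $R\cf_{\tau-1}$, and caches grow append-only, restoring the counterfactual state at $\tau-1$ amounts to cropping the cache. That point is a consequence of the lemma rather than part of its proof.
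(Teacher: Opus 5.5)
Your proof is correct and is the same induction the paper gives. The base case is $R_0 = R\cf_0 = R_{\mathrm{init}}$, and the step combines the induction hypothesis with $o_t = o\cf_t$ for $t<\tau$ to obtain $R_t = F(R_{t-1},o_t;\theta) = F(R\cf_{t-1},o\cf_t;\theta) = R\cf_t$. Your extra remarks are sound commentary beyond what the proof needs, with one small nit: for $\tau=1$ the range is $\{0\}$ rather than empty, though your base case already covers it.
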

\begin{proof}
By induction on $t$. \emph{Base}: $R_0 = R_{\mathrm{init}} = R\cf_0$. \emph{Step}: if $R_{t-1} = R\cf_{t-1}$ for some $t \le \tau-1$, then
$R_{t} = F(R_{t-1},o_t;\theta) = F(R\cf_{t-1},o_t;\theta) = F(R\cf_{t-1},o\cf_t;\theta) = R\cf_{t}$,
using the induction hypothesis and $o_t = o\cf_t$ for $t < \tau$.
\end{proof}

The proof is trivial; the corollary is not: \emph{the first $\tau-1$ counterfactual steps have already been computed, for free, by the real execution}. The clean prefix is not a cache-optimization trick---it is mathematically the shared part of the two worlds, and any scheme that discards it (e.g., a full reset) recomputes history on which the trajectories are identical.

\begin{lemma}[Taint monotonicity and the certifiable boundary]
\label{lem:taint}
Record runtime dependencies as a directed graph $G_t=(A_t,E_t)$, with $A_t$ the artifacts produced up to $t$ and $E_t$ the recorded data-flow edges; define the influence set as the reachability closure $I_t(z) = \{a \in A_t : z \rightsquigarrow_{G_t} a\}$. Then:
(i) \emph{monotonicity}: $I_t(z) \subseteq I_{t+1}(z)$ for all $t$;
(ii) \emph{certifiable boundary}: absent per-token influence attribution (i.e., without decomposing $F$ into selective reads of state components), the maximal artifact set certifiably independent of $z$ is exactly the prefix output $\{a: \mathrm{turn}(a) < \tau\}$.
\end{lemma}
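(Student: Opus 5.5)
The plan is to treat the two parts separately: (i) is a purely graph-theoretic fact, and (ii) needs a soundness direction (prefix artifacts are independent of $z$) and a maximality direction (no artifact at or after $\tau$ can be certified without finer attribution).

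\emph{Monotonicity (i).} I will use the append-only nature of the recorded graph. Transition $t{+}1$ only adds artifacts and edges, so $A_t \subseteq A_{t+1}$ and $E_t \subseteq E_{t+1}$; provenance records are never retracted. Any directed path $z \rightsquigarrow a$ in $G_t$ is therefore still a path in $G_{t+1}$, which gives $I_t(z) \subseteq I_{t+1}(z)$ directly. I will state append-only recording explicitly as part of the model, because (i) depends on it.

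\emph{Boundary (ii), soundness.} Take any artifact $a$ with $\mathrm{turn}(a) < \tau$. It is a component of some $R_t$ with $t \le \tau-1$. By Lemma~\ref{lem:prefix}, $R_t = R\cf_t$, so $a$ is produced identically in the counterfactual world, which never contains $z$. That makes $a$ certifiably independent of $z$ in the strongest semantic sense: it has the same value under both observation streams. In the graph, $z$ is first created at step $\tau$, and edges only point forward in time, so $a \notin I_t(z)$.

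\emph{Boundary (ii), maximality.} This is the main obstacle, because "certifiable" has to be pinned down. I will define a certifier as any procedure whose verdict depends only on the coarse dependency structure: which artifacts feed which transitions, with $F$ treated as a black box over the whole of $R_{t-1}$ and $o_t$. The argument then runs as follows.

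\begin{itemize}
\item Without per-token attribution, every transition $s \ge \tau$ reads the full state $R_{s-1}$.
\item The observation $o_\tau$ contains $z$, so every artifact produced at step $\tau$ has an incoming edge from $z$.
\item By induction on $s$, each step $s > \tau$ reads $R_{s-1}$, which already contains a tainted artifact. So every artifact with $\mathrm{turn}(a) \ge \tau$ lies in $I_T(z)$.
\end{itemize}

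To show that membership really blocks certification, I will use an adversarial-$F$ construction. For any such $a$, I exhibit two transition functions that are consistent with the same recorded graph. One ignores $z$, so $a$ coincides with its counterfactual value. The other copies a bit of $z$ into $a$, for example by paraphrasing it into a summary, so $a$ differs from its counterfactual value.

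A certifier that sees only the graph cannot tell these two functions apart. It therefore cannot certify $a$ without being unsound on the second one. Combining this with soundness, the prefix set $\{a : \mathrm{turn}(a) < \tau\}$ is exactly the maximal certifiable set.

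The delicate point is making sure the adversarial $F$ stays within the model class: it must be deterministic, and it must be realizable by the model together with its scaffold. I would argue this by noting that the lemma quantifies over all $F$ consistent with coarse reads, and that a copying behaviour is plainly realizable by an LLM.
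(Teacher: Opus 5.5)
Your proposal is correct and follows the paper's proof. Part (i) rests on the append-only growth of $A_t$ and $E_t$ plus monotonicity of reachability; in part (ii), soundness of the prefix comes from Lemma~\ref{lem:prefix}, and maximality from the fact that every transition at or after $\tau$ reads the full state, so every post-$\tau$ artifact lies in the conservative closure $I_t(z)$. Your two-$F$ indistinguishability argument is a useful sharpening of the step the paper only asserts---that removing the path $z \rightsquigarrow a$ would require the excluded per-token attribution---and it correctly targets certifiability rather than actual independence under the fixed $F$.
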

\begin{proof}
(i) Execution only appends nodes and edges: $A_t \subseteq A_{t+1}$, $E_t \subseteq E_{t+1}$, and reachability is monotone in the edge set. (ii) Artifacts with $\mathrm{turn}(a)<\tau$ are generated by the shared prefix of Lemma~\ref{lem:prefix}, so independence from $z$ is directly certifiable. Conversely, any artifact with $\mathrm{turn}(a)=t\ge\tau$ is generated by a transition that reads the full state $R_t$, and $z \rightsquigarrow R_\tau \rightsquigarrow \cdots \rightsquigarrow R_t$, so the conservative graph contains a path $z \rightsquigarrow a$, i.e., $a \in I_t(z)$; excluding that path would require proving the invocation of $F$ did not use the $z$-dependent components of $R_t$---exactly the per-token attribution capability we excluded. Hence the certified-clean set is $A_t \setminus I_t(z) = \{a:\mathrm{turn}(a)<\tau\}$.
\end{proof}

Lemma~\ref{lem:taint} is the theoretical root of ``deletion $\neq$ forgetting'': once read, $z$'s influence propagates along the reachability closure into summaries, plans, and pending tool calls; local edits can remove nodes of $I(z)$ but cannot reverse computation that has already happened---\emph{computation cannot be edited, only replayed}. Here, non-reusability refers to the \emph{original} post-target KV and model-derived runtime states: no state at or after $\tau$ may be carried over. It does not mean that every post-target token must be re-\emph{decoded}. Content whose value is fixed in the counterfactual world---recorded observations under Assumption~(A2), and, when an attribution oracle stronger than the one Lemma~\ref{lem:taint}(ii) assumes away certifies it, model turns independent of $z$---may be re-materialized by prefill on the freshly reconstructed cache; this is still reconstruction, since the original post-target KV representation is never reused.

Lemma~\ref{lem:taint} also separates two kinds of selectivity. Direct state \emph{reuse} is confined to the time dimension: only the clean prefix survives, and there is no per-item triage of the suffix's KV. \emph{How} each reconstructed transition is recomputed is a separate question: counterfactually fixed content can be replayed by prefill, while genuinely model-derived content must be decoded again.

\subsection{The Splicing Theorem and Optimality}

Lemma~\ref{lem:prefix} says the clean prefix can be reused directly; Lemma~\ref{lem:taint} says the original suffix state cannot be retained and must be reconstructed in order. This state-level reconstruction does not require autoregressively regenerating every recorded token: content fixed under Assumption~(A2) may be replayed through prefill. Their combination is the method:

\begin{theorem}[Splicing equivalence]
\label{thm:splice}
Let a checkpoint exist at $\tau-1$ (or any earlier clean boundary). Define the replayed trajectory $\tilde R_{\tau-1} = R_{\tau-1}$, $\tilde R_{t} = F(\tilde R_{t-1},\tilde o_t;\theta)$ for $t \ge \tau$, with sanitized observations $\tilde o_t$. If
\emph{(A1)} decoding is deterministic (or the randomness source is fixed);
\emph{(A2)} sanitized observations agree with the counterfactual ones, $\tilde o_t = o\cf_t$ for all $t \ge \tau$; and
\emph{(A3)} no committed external side effects exist after $\tau$ (the environment can be restored from a snapshot so the tool observations in (A2) are reproducible);
then $\tilde R_t = R\cf_t$ for all $t \ge \tau-1$; in particular $\tilde R_T = R\cf_T$, i.e., the splicing operator is an exact unlearner in the sense of Definition~\ref{def:cf}.
\end{theorem}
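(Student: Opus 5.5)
The plan is an induction on $t \ge \tau-1$, in the same style as the proof of Lemma~\ref{lem:prefix}, with that lemma supplying the base case. At $t=\tau-1$ we have $\tilde R_{\tau-1}=R_{\tau-1}$ by construction, and Lemma~\ref{lem:prefix} gives $R_{\tau-1}=R\cf_{\tau-1}$. So the replayed and counterfactual trajectories agree at the splice point. If the checkpoint sits at an earlier clean boundary $t_0<\tau-1$, I would start the induction at $t_0$ instead, using $R_{t_0}=R\cf_{t_0}$. Over the range $t_0<t<\tau$ the replay then uses $\tilde o_t=o_t=o\cf_t$, which is the same equality the lemma's own proof uses.

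For the inductive step, suppose $\tilde R_{t-1}=R\cf_{t-1}$ for some $t\ge\tau$. Then
\[
\tilde R_t=F(\tilde R_{t-1},\tilde o_t;\theta)=F(R\cf_{t-1},\tilde o_t;\theta)=F(R\cf_{t-1},o\cf_t;\theta)=R\cf_t,
\]
where the first equality uses the induction hypothesis and the second uses (A2).

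Each step quietly relies on $F$ being a genuine function, so that equal inputs give equal outputs. That is where (A1) and (A3) come in. Under (A1), the model-derived part of $F$, i.e.\ decoding, is a deterministic map of $(R_{t-1},o_t,\theta)$, or of these together with a fixed seed treated as part of $\theta$. Under (A3), the tool returns consumed during replay are the recorded ones. This is needed for (A2) to even be meaningful: the environment can be restored from a snapshot, so re-issuing a tool call reproduces the counterfactual observation rather than one shaped by side effects already committed in the real world.

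The main obstacle is making the role of (A3) precise, rather than the induction itself. In the counterfactual world, tool observations after $\tau$ may legitimately differ from the recorded ones, because the counterfactual agent may issue different tool calls. So $o\cf_t$ for $t>\tau$ is not simply $o_t$. I would handle this by treating the environment as part of the deterministic system: $o\cf_t$ is a deterministic function of the counterfactual agent's actions and the environment state. Replay re-executes those actions against the restored snapshot, which yields exactly $o\cf_t$; this is how (A2) is realized. Taking $t=T$ then gives $\tilde R_T=R\cf_T$, which is exactness in the sense of Definition~\ref{def:cf}.
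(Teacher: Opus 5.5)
Your proof is correct and is essentially the paper's argument. It runs an induction from the splice point with Lemma~\ref{lem:prefix} as the base case and uses the same chain $F(\tilde R_{t-1},\tilde o_t;\theta)=F(R\cf_{t-1},\tilde o_t;\theta)=F(R\cf_{t-1},o\cf_t;\theta)$, with (A1) making $F$ single-valued and (A2)/(A3) supplying $\tilde o_t=o\cf_t$; your explicit handling of an earlier clean boundary $t_0$ is a harmless addition.

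One correction to your commentary: in the paper's model the observation stream is exogenous, and $o\cf_t=o_t$ for every $t\neq\tau$ by definition. So the ``main obstacle'' you describe, counterfactually differing post-$\tau$ observations, is a modeling extension beyond the stated theorem rather than part of its proof. It does not affect your induction, which only ever uses (A2) as a hypothesis.
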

\begin{proof}
Induction on $t$. \emph{Base} ($t=\tau-1$): $\tilde R_{\tau-1} = R_{\tau-1} = R\cf_{\tau-1}$ by the restore operation and Lemma~\ref{lem:prefix}. \emph{Step}: if $\tilde R_{t-1} = R\cf_{t-1}$ for some $t \ge \tau$, then
$\tilde R_{t} = F(\tilde R_{t-1},\tilde o_t;\theta) = F(R\cf_{t-1},\tilde o_t;\theta) = F(R\cf_{t-1},o\cf_t;\theta) = R\cf_{t}$,
where the second and third equalities use, respectively, (A1) to make $F$ single-valued (otherwise pointwise equality is not even well posed) and (A2)/(A3) to guarantee the step-$t$ tool observation attains $o\cf_t$ during replay.
\end{proof}

\begin{corollary}[Recomputation lower bound and optimality]
\label{cor:lb}
In the computational model where an operator may only (a) read the stored real trajectory $\{R_t\}_{0 \le t \le T}$, observation stream, and derived metadata (provenance graph, checkpoints, environment snapshots), or (b) invoke $F$ to advance a state, any exact unlearning operator must invoke $F$ at least $T-\tau+1$ times in the worst case. The splicing operator (Algorithm~\ref{alg:replay}) invokes it exactly $T-\tau+1$ times and is therefore \emph{optimal} under the conservative taint model; the gain over a full reset ($T$ invocations) is $T/(T-\tau+1)$.
\end{corollary}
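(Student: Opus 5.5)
The corollary has two halves. The upper bound comes straight from Theorem~\ref{thm:splice}. The lower bound is an adversary (worst-case) argument in the stated computational model. I would prove the upper bound first. Algorithm~\ref{alg:replay} reads the checkpoint $R_{\tau-1}$, which Lemma~\ref{lem:prefix} says equals $R\cf_{\tau-1}$, and this read costs no $F$-invocation under clause (a). It then computes $\tilde R_t=F(\tilde R_{t-1},\tilde o_t;\theta)$ for $t=\tau,\dots,T$, which is exactly $T-\tau+1$ invocations. By Theorem~\ref{thm:splice} the result is $R\cf_T$. A full reset instead recomputes from $R_0$ with $T$ invocations, so the ratio $T/(T-\tau+1)$ is immediate.

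For the lower bound I would construct a worst-case family of instances in which every counterfactual state $R\cf_t$ with $t\ge\tau$ is ``new information''. Concretely:
\begin{itemize}
\item Take $\mathcal{R}$ to be large, and choose $F$ so that for $t\ge\tau$ we have $R\cf_t\neq R_s$ for every stored real state $s$.
\item Make $R\cf_t$ undeterminable from the stored data (observations, provenance graph, checkpoints, snapshots) except by evaluating $F$ at the pair $(R\cf_{t-1},o\cf_t)$.
\end{itemize}
The cleanest way to get this is to treat $F$ as a black-box (random-oracle-like) function. The operator learns $F$ only through queries. Stored states reveal $F$ only at the real pairs $(R_{t-1},o_t)$. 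Lemma~\ref{lem:taint}(ii) justifies the first condition: under the conservative taint model nothing at or after $\tau$ is certifiably $z$-independent, so the adversary may set every post-$\tau$ real state to differ from its counterfactual counterpart. This works because $o_\tau\neq o\cf_\tau$, and $F$ can be injective in its observation argument.

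The core step is an inductive chain argument. To output $R\cf_T$, the operator must know $F(R\cf_{T-1},o\cf_T)$. That value is never among the stored values, since the pair $(R\cf_{T-1},o\cf_T)$ differs from every real pair. Hence the operator must query $F$ at exactly that point, which in turn requires knowing $R\cf_{T-1}$. Unwinding the chain back to $\tau$ forces distinct queries at the $T-\tau+1$ pairs $(R\cf_{t-1},o\cf_t)$ for $t=\tau,\dots,T$. The query at $t=\tau$ is necessary because $o\cf_\tau\neq o_\tau$, even though $R\cf_{\tau-1}=R_{\tau-1}$ is free. If the operator could guess an output without a query, it would be correct with negligible probability over the random $F$, or it would fail on some member of the family, so a deterministic worst-case version goes through by a counting/adversary argument.

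I expect the main obstacle to be making ``must query'' rigorous in a deterministic setting. It means ruling out operators that extract $R\cf_t$ from stored real states, for example by editing $R_t$ directly. I would handle this by letting the adversary choose $F$ after fixing the operator's query-free computations. Whenever the operator outputs a state without having queried the chain point, the adversary redefines $F$ at that unqueried point to disagree with the output. This is consistent, because that point appears nowhere in the stored data. I would also note that this is exactly the ``no per-token attribution'' regime of Lemma~\ref{lem:taint}. Any assumption letting the operator skip a query would amount to certifying some post-$\tau$ component as $z$-independent.
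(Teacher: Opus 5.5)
Your proposal is correct and follows the paper's own route. The upper bound is Theorem~\ref{thm:splice} applied from the stored $R_{\tau-1}=R\cf_{\tau-1}$ (Lemma~\ref{lem:prefix}) with $T-\tau+1$ calls, and the lower bound counts the $T-\tau+1$ counterfactual transitions that no stored state short-circuits in the worst case. Your lazily-defined-oracle adversary makes rigorous a step the paper's sketch only asserts: that each call to $F$ advances the counterfactual trajectory by at most one step, and that neither stored data nor direct edits of real states can produce a suffix state. It tightens the paper's argument rather than changing it.
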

\begin{proof}[Proof sketch]
\emph{Lower bound}: exactness requires outputting $R\cf_T$. When $z$ has nonzero influence, $R\cf_t \neq R_t$ for all $t \ge \tau$ in the worst case, so no state on the counterfactual suffix is stored and route (a) is unavailable; each invocation of route (b) advances the counterfactual trajectory by one step, and by Lemma~\ref{lem:prefix} the only stored state lying on it is at most $R_{\tau-1}$. Advancing from $R\cf_{\tau-1}$ to $R\cf_T$ takes $T-(\tau-1)$ invocations. \emph{Upper bound}: Algorithm~\ref{alg:replay} replays from $\tilde R_{\tau-1}$ in exactly $T-\tau+1$ steps, exact by Theorem~\ref{thm:splice}.
\end{proof}

Corollary~\ref{cor:lb} yields a testable prediction: forgetting cost equals the post-target \emph{suffix length} $T-\tau+1$, not the session length $T$---the later the target arrives, the closer forgetting is to free; the ablations below verify it. The bound counts sequential state \emph{transitions}, not autoregressively decoded tokens: content that is fixed within a replayed transition may be re-materialized by prefill, so token-level cost can fall below the transition count without contradicting the lower bound. (Since Theorem~\ref{thm:splice} makes equality a \emph{constructive} guarantee under deterministic decoding, the experiments emphasize efficiency and distributional consistency under stochastic decoding rather than treating agreement as a discovery.)

\subsection{From Theorems to System}
\label{sec:system}

Each theoretical object maps to a system component (Figure~\ref{fig:arch}), answering respectively \emph{where} to splice, \emph{what} to splice, and \emph{how}:

\textbf{(a) Provenance graph $G$ --- causal reachability, materialized (where).}
During execution we record artifact-level data flow: memory/tool field $\to$ prompt block $\to$ model turn $\to$ reply/plan $\to$ tool call $\to$ observation $\to$ summary write-back, each artifact carrying $(\mathrm{id}$, $\mathrm{type}$, $\mathrm{parents}$, $\mathrm{source\_ids}$, $\mathrm{token\_span}$, $\mathrm{turn}$, $\mathrm{committed})$. On a forget request, one reachability query returns the injection point $\tau$ and taint closure $I(z)$. We deliberately do \emph{not} attempt token-level attribution: the graph records only dependencies that actually occurred---conservative but certifiable (Lemma~\ref{lem:taint}ii).

\textbf{(b) Sparse checkpoint set $\mathcal{C}$ --- splice points, materialized (what).}
At semantic boundaries (session start, turn boundaries, before memory injections, consequential tool calls, and compactions) we register checkpoints holding only metadata (token offset, cache handle, environment-snapshot ID, prompt manifest)---no tensor copies. By Lemma~\ref{lem:prefix}, \emph{cropping is restoring}.

\textbf{(c) Sanitized replay --- re-enacting the counterfactual suffix (how).}
Read-only/deterministic tools are re-executed from the environment snapshot or replayed from recorded observations (realizing A2/A3); consequential tools are shadow-executed during replay with call-ID deduplication, so real side effects never fire twice.

Algorithm~\ref{alg:replay} assembles the pipeline: lines 1--3 are provenance queries (locate $\tau$, invalidate $I(z)$); lines 4--5 are the $O(1)$ checkpoint restore (Lemma~\ref{lem:prefix}); lines 6--10 re-enact the counterfactual suffix (the construction of Theorem~\ref{thm:splice}); total recomputation $T-\tau+1$ attains the bound of Corollary~\ref{cor:lb}.

\begin{algorithm}[t]
\caption{Provenance-Guided Selective Replay}
\label{alg:replay}
\begin{algorithmic}[1]
\REQUIRE runtime $R_T$, target $z$, provenance graph $G$, checkpoints $\mathcal{C}$, recorded observations $\{o_t\}_{t=1}^{T}$
\ENSURE counterfactual-equivalent runtime $R' = R\cf_T$
\STATE $s \leftarrow \mathrm{LocateSource}(G, z)$ \hfill $\triangleright$ source artifact of $z$
\STATE $\tau \leftarrow \mathrm{FirstEntry}(G, s)$ \hfill $\triangleright$ first entry boundary
\STATE $I(z) \leftarrow \mathrm{TaintClosure}(G, s)$ \hfill $\triangleright$ invalidate closure
\STATE $c^* \leftarrow \arg\max\{c \in \mathcal{C} : \mathrm{boundary}(c) < \tau\}$
\STATE $R \leftarrow \mathrm{Restore}(c^*)$ \hfill $\triangleright$ crop KV to $c^*$; load env snapshot
\FOR{$t = \mathrm{boundary}(c^*)+1 \dots T$}
  \STATE $\tilde o_t \leftarrow \mathrm{Sanitize}(o_t, z)$ \hfill $\triangleright$ remove $z$; keep the rest (A2)
  \STATE $\tilde o_t \leftarrow \mathrm{ReplayTools}(\tilde o_t)$ \hfill $\triangleright$ shadow exec + dedup (A3)
  \STATE $R \leftarrow F(R, \tilde o_t;\theta)$ \hfill $\triangleright$ regenerate derived artifacts
\ENDFOR
\RETURN $R' \leftarrow R$
\end{algorithmic}
\end{algorithm}

\section{Experiments}
\label{sec:experiments}

\subsection{Implementation and Setup}
\label{sec:impl}

\paragraph{Runtime harness.}
We implement this transition system on HuggingFace Transformers with explicit KV management. A \texttt{KVEngine} exposes the three primitives the theory needs: \texttt{prefill}, \texttt{generate}, and \texttt{crop}. Every block, turn, tool call, observation, and summary is a \texttt{RuntimeBlock} carrying the provenance tuple above; oracle source-ID propagation (a turn generated while the target is resident inherits its source ID) populates the \texttt{ArtifactGraph}, whose taint queries implement lines~1--3 of Algorithm~\ref{alg:replay}. A \texttt{CheckpointStore} keeps metadata-only handles at session start, turn boundaries, and before the target; \texttt{Restore} is a \texttt{crop} call, granularity ablated below.

\paragraph{Episodes.}
Each episode is a three-phase turn script---\emph{setup} (clean prefix), \emph{acquisition} (the target enters via memory injection or tool observation), \emph{contamination} ($\ge 1$ model turns folding the target into an answer and, per suite, a \emph{summary} and/or pending \emph{tool plan})---re-run with the target excluded to produce the counterfactual reference. Episodes are converted from \textbf{LongMemEval} \cite{wu2025longmemeval} ($n{=}100$, memory-injected facts), \textbf{ToolSandbox} \cite{lu2024toolsandbox} ($n{=}100$, tool-observed identifiers), and \textbf{AgentDojo} \cite{debenedetti2024agentdojo} ($n{=}80$; slack/workspace/banking/travel, 20 each); the tool-observation channel instantiates the detector-triggered case above: the target arrives inside a tool result the user never sees, and the forget request names the flagged observation. Future queries are task-shaped and solicit the target (e.g., ``schedule an appointment \emph{near my home}'').

\paragraph{Methods.}
All methods branch from the \emph{same} contaminated base state (cache cloned), so comparisons are paired: \textbf{B0} No-Forget; \textbf{B1} Memory-Delete (remove the persistent record, session untouched---what deployed stacks do); \textbf{B2} Forget-Instruction (append ``forget $z$''); \textbf{B3} Source-Redaction (drop the source block, keep derived artifacts); \textbf{B4} Sanitize-no-Replay (drop source + descendants, regenerate nothing); \textbf{B5} Full-Reset (the counterfactual reference $R\cf_T$); \textbf{B5$'$} Full-Reset $+$ prefix cache (control isolating how much of B7's saving a generic cache recovers); \textbf{B6} Sanitized-Rebuild (string-redact the transcript, rebuild); \textbf{B7} \textbf{Selective-Replay} (Algorithm~\ref{alg:replay}); \textbf{B8} FIDES-style IFC \cite{costa2025fides} (sink policy over identifier-type tool arguments, no state cleanup) and \textbf{B9} IFC + replay.

\paragraph{Models and decoding.}
Primary model: Llama-3.1-8B-Instruct on one RTX~4090; cross-family replication on Qwen2.5-7B-Instruct and Mistral-7B-Instruct-v0.3 \cite{grattafiori2024llama3,qwen2025qwen25,jiang2023mistral}. Main tables use temperature~0; stochastic audits use $k{=}5$ samples at sampling temperature $0.7$ under matched or independent seeds as noted ($T$ denotes session length throughout).

\paragraph{Metrics and audits.}
\emph{Leakage} is scored over the final answer, tool arguments, and memory write-backs: \emph{exact} (normalized string variants), \emph{action} (target in a tool argument, by sink class: routing / selection / free-text), and \emph{any}. \emph{CAD} (counterfactual action distance) scores over-deletion: tool-choice mismatch plus argument distance vs.\ the B5 reference. \emph{Utility} checks preserved non-target facts; \emph{efficiency} reports reused vs.\ recomputed tokens and latency. Since single-probe string matching is a lower bound, we add three audits, each disciplined by a measured false-positive floor (every probe also runs against B5; probes with nonzero floors are dropped---this excluded an LLM judge, floor 0.34): \textbf{Leak@probes} (six probes: task, direct, think, introspect, enumerate, cued-completion), \textbf{Leak@5} (5 stochastic samples), and a \textbf{behavioral-extraction} suite (below). Significance is paired throughout: exact McNemar (binary), Wilcoxon signed-rank (continuous).

\subsection{Deletion is Not Forgetting}
\label{sec:main-results}

\begin{table}[t]
\centering
\small
\setlength{\tabcolsep}{3pt}
\resizebox{0.85\columnwidth}{!}{%
\begin{tabular}{l ccc c c}
\toprule
 & \multicolumn{3}{c}{Any-leak $\downarrow$} & CAD $\downarrow$ & Recomp $\downarrow$ \\
\cmidrule(lr){2-4}
Method & LME & TS & AD & (avg) & (avg tok) \\
\midrule
B0 No-Forget          & 0.86 & 1.00 & 0.97 & 0.37 & 36 \\
B1 Memory-Delete      & 0.86 & 1.00 & 0.97 & 0.37 & 36 \\
B2 Forget-Instruction & 0.38 & 0.72 & 0.00$^{\dagger}$ & 0.39 & 55 \\
B3 Source-Redaction   & 0.84 & 0.94 & 0.39 & 0.33 & 200 \\
B4 Sanitize-no-Replay & 0.00 & 0.00 & 0.00 & 0.22 & 36 \\
B6 Sanitized-Rebuild  & 0.00 & 0.00 & 0.00 & 0.39 & 220 \\
B5 Full-Reset (ref)   & 0.00 & 0.00 & 0.00 & 0.00 & 1235 \\
B5$'$ +PrefixCache    & 0.00 & 0.00 & 0.00 & 0.00 & 202 \\
\textbf{B7 Selective-Replay} & \textbf{0.00} & \textbf{0.00} & \textbf{0.00} & \textbf{0.00} & \textbf{133} \\
\bottomrule
\end{tabular}}
\caption{Headline results (temperature 0; LME $n{=}100$, TS $n{=}100$, AD $n{=}80$; CAD/Recomp averaged over suites). Sessions run 12 turns past the target, half of them independent of it (\emph{indep} $=0.5$); Recomp counts tokens a method must recompute. $^{\dagger}$AD's task query never solicits the target; cf.\ B2 $=1.00$ in Table~\ref{tab:elicit}.}
\label{tab:main}
\vspace{-5pt}
\end{table}

Table~\ref{tab:main} makes the negative claim precise. \textbf{B1 equals B0 in every cell}: deleting the persistent record changes nothing---the target survives in answer, summary, plan, and cache ($p{=}1.0$ vs.\ B0). \textbf{B2} suppresses leakage but leaves the value resident (\emph{Elicitation}, below). \textbf{B3} leaks through derived artifacts: the model's summary and plan re-emit the target (0.84/0.94 any-leak). On ToolSandbox, 98\% of B0/B1 leaks flow through \emph{routing} arguments, steering side effects. \textbf{B4} achieves string-clean state by amputation but removes artifacts present in the counterfactual (CAD 0.22): over-deletion, not forgetting. \textbf{B8} (IFC only) blocks identifier sinks yet leaks at the B0 rate through free text (0.92--1.00); adding replay (\textbf{B9}) drops it to zero ($p{<}0.001$): state cleanup and flow control are orthogonal. \textbf{B7} matches B5 exactly (any-leak 0, CAD 0, agreement 1.0) while recomputing $9.3\times$ fewer tokens than a full reset (133 vs.\ 1235; $p{<}10^{-3}$) and $1.5\times$ fewer than B5$'$, the prefix-cached control that recovers the same prefill saving but re-decodes the whole suffix---that residual gap is what provenance buys, and it scales with \emph{indep}.


\subsection{Elicitation and Stochastic Audits}
\label{sec:elicit}

\begin{table}[t]
\centering
\small
\setlength{\tabcolsep}{4.5pt}
\begin{tabular}{l ccc c}
\toprule
 & \multicolumn{3}{c}{Leak@probes $\downarrow$} & Leak@5 $\downarrow$ \\
\cmidrule(lr){2-4}
Method & AD & LME & TS & TS \\
\midrule
B0 No-Forget          & 1.00 & 1.00 & 1.00 & 1.00 \\
B2 Forget-Instruction & 1.00 & 1.00 & 1.00 & 0.70 \\
B3 Source-Redaction   & 0.73 & 0.80 & 0.97 & 1.00 \\
B6 Sanitized-Rebuild  & --   & --   & --   & 0.00 \\
B8 FIDES-style IFC    & --   & --   & --   & 1.00 \\
B5 Full-Reset (floor) & 0.00 & 0.00 & 0.00 & 0.00 \\
\textbf{B7 Selective-Replay} & \textbf{0.00} & \textbf{0.00} & \textbf{0.00} & \textbf{0.00} \\
\bottomrule
\end{tabular}
\caption{Adversarial audits ($n{=}30$/suite). Leak@probes: leaked under \emph{any} of six probes against the same post-unlearning state; Leak@5: any of 5 samples at temperature $0.7$ (ToolSandbox). B5 $=$ measured false-positive floor.}
\label{tab:elicit}
\end{table}

Table~\ref{tab:elicit} shows why single probes mislead. \textbf{B2's Leak@probes is 1.00 on all three suites}: an introspection probe (``what were you told to forget?'') alone re-licenses the value at 0.87--1.00, and enumeration and chain-of-thought probes surface it where a direct question does not: the instruction leaves the value in state and commands silence. B3's paraphrased derivations yield 0.73--0.97; B7 sits at the B5 floor: zero under every probe, including cued completion, and zero on Leak@5. The audit also resolves Table~\ref{tab:main}'s AgentDojo anomaly: B2's task-probe 0.00 reflects a query that never solicits the target, not the method; safety claims must rest on Leak@probes.

\subsection{Behavioral Extraction: Influence Without Strings}
\label{sec:behavioral}

\begin{table}[t]
\centering
\small
\setlength{\tabcolsep}{4.2pt}
\begin{tabular}{l cccc}
\toprule
Method & avoid & residue & $p$ & says it \\
\midrule
B0 No-Forget          & 1.00 & $+0.68$ & $<.001$ & 0.00 \\
B1 Memory-Delete      & 1.00 & $+0.68$ & $<.001$ & 0.00 \\
B2 Forget-Instruction & 1.00 & $+0.68$ & $<.001$ & 0.00 \\
B3 Source-Redaction   & 0.80 & $+0.48$ & $<.001$ & 0.00 \\
B6 Sanitized-Rebuild  & 1.00 & $+0.68$ & $<.001$ & 0.00 \\
B5 Full-Reset (ref)   & 0.32 & ---     & ---     & 0.00 \\
\textbf{B7 Selective-Replay} & \textbf{0.32} & $\bm{+0.00}$ & 1.0 & 0.00 \\
\bottomrule
\end{tabular}
\caption{Behavioral extraction (30 preference episodes $\times$ 2 counterbalanced orders). The agent picks between two near-equivalent providers, one excluded by a \emph{revoked} preference; ``avoid'' $=$ rate of picking the other, ``says it'' $=$ string leakage of the revoked reason. The zero point is \emph{measured} (B5 $=$ 0.32), not assumed.}
\label{tab:behavioral}
\vspace{-5pt}
\end{table}

Every other leakage number here is a string match, so a method that stops \emph{saying} the target scores 0.00 whether or not it still shapes what the agent \emph{does}; Table~\ref{tab:behavioral} separates the two via a revoked \emph{preference} (an exclusion) the agent need never state to act on. \textbf{B0, B1, B2 and B6 all score 0.00 on string leakage yet act on the revoked preference in 100\% of episodes} (residue $+0.68$, $p{<}0.001$); B7 and B5$'$ match the reference ($+0.00$). A variant handing the redaction baselines an oracle (the excluded brand added to the forbidden list) is instructive: B6, which redacts \emph{every} artifact including the model-authored summary, drops to the floor (0.35, n.s.); B3, which keeps derived artifacts, stays at 0.80. The honest reading: string redaction works \emph{only if} it reaches every derived artifact \emph{and} the exact string is known in advance; a preference has neither property---the model paraphrases it into its own notes. Replay needs no such assumption.

\subsection{Exactness, Efficiency, and Generality}
\label{sec:efficiency}

\begin{figure}[t]
\centering
\includegraphics[width=1\columnwidth]{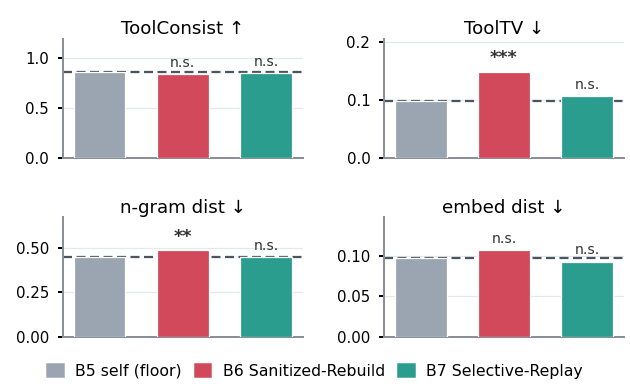}
\caption{Behavioral distribution under stochastic decoding (pooled $n{\approx}90$; $k{=}5$ samples at temperature $0.7$, independent seeds; dashed $=$ B5 self-sampling floor). Stars: Wilcoxon vs.\ floor ($^{**}p{<}.01$, $^{***}p{<}.001$). No B7 panel diverges detectably; B6's stars show the test has power.}
\label{fig:dist}
\vspace{-5pt}
\end{figure}

\paragraph{Exactness (Definition~\ref{def:cf}).}
Under matched seeds at temperature $0.7$, B7 reproduces the B5 reference \emph{token-for-token} in 450/450 draws ($\varepsilon{=}0$); B6 manages 1\%. Under independent seeds---the meaningful distributional test---no divergence from the B5 self-sampling floor is detectable for B7 on any of the four distances (tool consistency, tool TV, $n$-gram, embedding; Wilcoxon $p>0.5$ throughout), whereas the same test flags B6 (tool TV $p{<}10^{-3}$; Figure~\ref{fig:dist}). Non-significance is not proof of equivalence; B6 is the positive control showing the test has power here, and formal equivalence testing (TOST against a pre-registered margin) is future work.

\paragraph{The $T-\tau+1$ law (Corollary~\ref{cor:lb}).}
Sweeping the injection step $\tau$ from the first turn to the last, B5's recomputation stays flat while B7's descends linearly in $\tau$ ($R^2{=}1.000$ on all three suites; LongMemEval: 1218$\to$162 tokens vs.\ B5's constant 1330)---tracking the post-target suffix length, not the session length. The second axis is \emph{indep}, the fraction of post-target work causally independent of $z$---what moves B7 and B5$'$ apart. Real sessions carry such work in bulk: a skill card or document loaded and never used, a routine tool poll, an unrelated sub-task. Those turns are byte-identical counterfactually, so B7 re-\emph{prefills} them (parallel) and re-\emph{decodes} only the target-dependent remainder, while B5$'$ recovers the same prefill saving but re-decodes the whole suffix. B7's decode cost therefore falls linearly with \emph{indep} ($R^2{=}1.00$) while B5$'$ stays flat at 545--610 tokens; at full independence B7 decodes 114--147 tokens, 2.9--3.7 vs.\ 11.5--12.7\,s p50 (prefix caching alone: $1.01\times$). At \emph{indep} $=0$ they coincide exactly---the honest degenerate case---with any-leak and CAD at the B5 floor throughout. This saving relies on an attribution oracle stronger than Lemma~\ref{lem:taint}(ii)'s conservative model; it reduces token cost \emph{within} transitions, not their number, leaving Corollary~\ref{cor:lb} intact. B7 further provides cache-residency independence, snapshot restoration, and a certifiable audit trail.

\paragraph{Cross-model.}
The two load-bearing findings---B3 still leaks, B7 reaches the clean reference cheaply---reproduce on both other families (280 paired episodes each; B3 any-leak 0.84--1.00, B7 all-zeros, 2.9--4.3$\times$ savings). The one anomalous cell in the matrix (Llama's B3 $=$ 0.39 on AgentDojo) is model-specific (B3 $\geq$ 0.92 elsewhere)---a reason single-model unlearning evaluations mislead.

\subsection{Ablations}
\label{sec:ablations}

\begin{table}[t]
\centering
\footnotesize
\setlength{\tabcolsep}{4.5pt}
\begin{tabular}{l ccc c}
\toprule
 & \multicolumn{3}{c}{Any-leak $\downarrow$} & CAD $\downarrow$ \\
\cmidrule(lr){2-4}
Invalidation reach & AD & LME & TS & (avg) \\
\midrule
None (B0)                & 0.97 & 0.83 & 1.00 & 0.38 \\
Target span only         & 0.13 & 0.63 & 0.93 & 0.47 \\
Span + descendants (B4)  & 0.00 & 0.00 & 0.00 & 0.23 \\
+ replay (\textbf{B7})   & \textbf{0.00} & \textbf{0.00} & \textbf{0.00} & \textbf{0.00} \\
Full reset (B5)          & 0.00 & 0.00 & 0.00 & 0.00 \\
\bottomrule
\end{tabular}
\caption{Invalidation-boundary ablation ($n{=}30$/suite). Span-only excision is unsafe \emph{and} damages utility (CAD worse than no forgetting); safety needs the descendant closure, equivalence additionally needs replay.}
\label{tab:boundary}
\vspace{-10pt}
\end{table}

\paragraph{Invalidation boundary (Lemma~\ref{lem:taint} is not pessimism).}
Could one keep the suffix KV and excise just the target's span? Table~\ref{tab:boundary} says no: span-only excision still leaks in up to 93\% of episodes---the surrounding KV was \emph{computed while attending to} the target---and its CAD (0.47) is \emph{worse than no forgetting at all} (0.38), since deleting mid-context positions corrupts state the model misreads. Each escalation fixes one failure mode: the descendant closure (B4) zeroes leakage but leaves the runtime missing artifacts the counterfactual would have (CAD 0.23); only replay reaches equivalence. This is the empirical face of Lemma~\ref{lem:taint}, and the depth sweep gives the matching \emph{necessity} argument for provenance: B3 is clean at derivation depth~0 (any-leak 0.00) and degrades to 0.33--0.97 as an answer, plan, and summary stack on top; B7 holds any-leak $=$ CAD $=$ 0 at every depth.

\paragraph{Checkpoint granularity (an efficiency knob, not a correctness one).}
Sweeping checkpoint policies from \texttt{every\_turn} to \texttt{session\_start} (one checkpoint, no reusable pre-target prefix) leaves any-leak and CAD flat at zero: by Theorem~\ref{thm:splice}, replay from an earlier clean boundary is equally exact, merely longer. Granularity buys only prefill reuse ($\sim$1000 tokens, 0.1--0.15\,s) at negligible metadata cost ($\le$4.4\,KB); B7 keeps its latency advantage even at \texttt{session\_start} (7.8 vs.\ 12.2\,s), since it comes from provenance-guided re-\emph{prefilling}, which needs the artifact graph, not the checkpoint.

\paragraph{Target position (leakage is position-invariant; cost is not).}
Moving the injection early/middle/late leaves every method's leakage and CAD essentially unchanged---the target contaminates the session wherever it sits---while B7's cost ratio to a full reset falls from 0.91 to 0.14: the $T/(T-\tau+1)$ profile of Corollary~\ref{cor:lb}.

\section{Conclusion}

Stateful agents broke the equation between deleting a record and forgetting it: once read, information propagates into summaries, plans, and cached tensors that forget operations never touch. Defining forgetting as counterfactual equivalence---\emph{exactly} achievable at the runtime layer, where splicing the clean prefix to a replayed suffix costs $T-\tau+1$ transitions and no exact operator does better---we built Provenance-Guided Selective Replay, a cross-layer contract matching a full reset under every audit at a fraction of its cost.

\paragraph{Limitations.}
The guarantee covers reconstructible runtime state, not model parameters \cite{liu2025rethinking}, committed side effects (A3), or correlates of $z$; recomputation cannot drop below $T-\tau+1$. Assumption (A2) fixes post-$\tau$ observations, so replay covers snapshot-replayable tool returns and memory injections but not human turns that would have differed. Audits are string-based apart from the behavioral suite, and the distributional result is non-significance, not equivalence.

\bibliography{aaai2027}


\end{document}